\DeclareMathOperator*{\argmin}{arg\,min}  
\newcommand{\pdim}{\operatorname{P{\mkern-1mu}dim}}
\newcommand{\cost}{\operatorname{cost}}
\newcommand{\supp}{\operatorname{supp}}
\newcommand{\poly}{\operatorname{poly}}
\newcommand{\val}[1]{\operatorname{val}(#1)}
\newtheorem{theorem}{Theorem}
\newtheorem{lemma}[theorem]{Lemma}
\theoremstyle{definition}
\newtheorem{definition}[theorem]{Definition}
\title{Learning-Augmented Maximum Flow}
\author{Adam Polak\thanks{Supported by the Swiss National Science Foundation project \emph{Lattice Algorithms and Integer Programming} (185030).} \\ {\normalsize EPFL} \and Maksym Zub \\ {\normalsize Jagiellonian University}}
\date{}
\begin{document}

\maketitle

\begin{abstract}
We propose a framework for speeding up maximum flow computation by using predictions. A prediction is a flow, i.e., an assignment of non-negative flow values to edges, which satisfies the flow conservation property, but does not necessarily respect the edge capacities of the actual instance (since these were unknown at the time of learning). We present an algorithm that, given an $m$-edge flow network and a~predicted flow, computes a maximum flow in $O(m\eta)$ time, where $\eta$ is the $\ell_1$ error of the prediction, i.e., the sum over the edges of the absolute difference between the predicted and optimal flow values. Moreover, we prove that, given an oracle access to a~distribution over flow networks, it is possible to efficiently PAC-learn a~prediction minimizing the expected $\ell_1$ error over that distribution. Our results fit into the recent line of research on learning-augmented algorithms, which aims to improve over worst-case bounds of classical algorithms by using predictions, e.g., machine-learned from previous similar instances. So far, the main focus in this area was on improving competitive ratios for online problems. Following Dinitz et al.~(NeurIPS 2021), our results are one of the firsts to improve the running time of an offline problem.
\end{abstract}

\section{Introduction}

Computing a maximum $s$-$t$ flow in a flow network (i.e., in a directed graph with nonnegative edge capacities and designated source and sink nodes) is a basic problem in combinatorial optimization. It is a building block of a number of more advanced algorithms, with relevance both in theory (e.g., in graph algorithms and scheduling) and in practice (e.g., in computer vision and transport).

Imagine we are to solve multiple similar instances of the maximum flow problem, e.g., the instances are drawn at random from a distribution, or they are snapshots of a single underlying instance changing over time. Can we learn an approximate shape of optimal solutions, and then use it to speed up further computations? Or, to put it differently, assume we have a solution -- e.g., obtained from past data or computed by a very fast heuristic -- that is not necessarily optimal, maybe not even feasible, but close to an optimal solution. How can we use such an imperfect solution to warm-start a maximum flow algorithm and get a better running time?

Warm-starting maximum flow algorithms have been studied in the past heuristically (e.g., in computer vision, where maximum flow is often used to compute minimum cuts in subsequent frames of a video~\cite{JuanB06}). In contrast, we propose an approach with theoretical guarantees.

Learning-augmented algorithms (also called \emph{algorithms with predictions}) are the subject of a recent line of research that aims to improve over worst-case bounds of classical algorithms by using possibly imperfect predictions. So far, the main focus in this area was on improving competitive ratios for online problems. Dinitz et al.~\cite{Dinitz21} took a first step to explore improving the running times of offline problems. They gave an algorithm for the weighted bipartite matching problem that uses a learned dual solution to improve over the running time of the classic Hungarian algorithm. Our approach draws inspiration from their work, but it differs significantly in two aspects. First, we learn a primal, not a dual solution. Second, we impose an additional restriction on the learned solution, namely the flow conservation property. This restriction makes our learning problem harder and the subsequent algorithmic problem easier. In Section~\ref{sec:related} we discuss these differences in greater depth.

\subsection{Our results}

We propose a framework for speeding up maximum flow computation by using predicted flow values. Here, by \emph{prediction} we mean a flow, which satisfies the flow conservation property, but does not necessarily respect the edge capacities of the actual instance (since these were unknown at the time of learning). We present an algorithm that, given an $m$-edge flow network with edge capacities $c \in \mathbb{Z}_{\geqslant 0}^m$, and a predicted flow $f \in \mathbb{Z}_{\geqslant 0}^m$, computes a maximum flow $f^*(c) \in \mathbb{Z}_{\geqslant 0}^m$ in $O(m \eta)$ time, where $\eta=||f-f^*(c)||_1=\sum_{e\in E}|f(e) - f^*(c)(e)|$ is the $\ell_1$ error of the prediction. Moreover, we prove that, given an oracle access to a (joint) distribution over edge capacities, it is possible to efficiently PAC-learn a prediction minimizing the expected $\ell_1$ error over that distribution.

To formally state our results, let us first define the maximum flow problem and related concepts.

\begin{definition}
\label{def:flow}
Given a directed graph $G=(V, E)$, \emph{source} and \emph{sink} nodes $s, t \in V$, and nonnegative integral edge \emph{capacities} $c: E \to \mathbb{Z}_{\geqslant 0}$, the \emph{maximum flow problem} asks to find a function $f : E \to \mathbb{Z}_{\geqslant 0}$, assigning nonnegative integral \emph{flow} to the edges, that satisfies
\begin{itemize}
\item \emph{capacity constraints}: $\forall_{e \in E} \, f(e) \leqslant c(e)$, and
\item \emph{flow conservation}: $\forall_{v \in V \setminus \{s,t\}} \, \sum_{(u,v) \in E} f(u,v) = \sum_{(v,u) \in E} f(v, u)$,
\end{itemize}
and maximizes the \emph{flow value} defined as $\val{f} = \sum_{(s,u) \in E} f(s,u)$.

We denote by $f^*(c)$ a maximum flow for given capacities $c$.
\end{definition}

Let us note that we have made the decision to focus on the integral version of the problem for two reasons. First, in many applications edge capacities are integral anyway, and hence there always exists an integral solution as well, see, e.g.,~\cite{AhujaMO93}. Second, the error measure we work with, namely the $\ell_1$ distance, is meaningless if one allows arbitrary scaling without changing the problem, as it would the case for rational edge capacities.

In Section~\ref{sec:warmstart}, we prove the following theorem giving an algorithm that can be seen as the Ford-Fulkerson method with a warm start.

\begin{restatable}{theorem}{thmwarmstart}
\label{thm:warmstart}
Given a directed graph $G = (V,E)$, source and sink $s, t \in V$, edge capacities $c: E \to \mathbb{Z}_{\geqslant 0}$, and a predicted flow function $f: E \to \mathbb{Z}_{\geqslant 0}$ satisfying flow conservation, one can compute a maximum $(s,t)$-flow in $G$, in time
\[ O\big(|E| \cdot ||f - f^*(c)||_1\big). \]
\end{restatable}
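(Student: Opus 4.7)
The plan is to adapt the Ford-Fulkerson method to start from the predicted flow $f$, addressing the fact that $f$ may violate capacities. The first step would be to define a residual graph $G_f$ in a slightly nonstandard way: for each edge $e = (u,v) \in E$, include a forward arc $(u,v)$ of capacity $\max(0, c(e) - f(e))$ and a backward arc $(v,u)$ of capacity $f(e)$; in particular, an over-capacity edge contributes only its (potentially large) backward arc. The algorithm then runs in two phases. In Phase~1, as long as some edge $(u,v)$ is over-capacity (i.e.\ $f(u,v) > c(u,v)$), I would find an augmenting structure in $G_f$ containing the backward arc $(v,u)$---this will be a cycle, an $s$-to-$t$ path, or a $t$-to-$s$ path through that arc---and augment along it by one unit, which strictly decreases the total excess $I = \sum_e \max(0, f(e) - c(e))$. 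Once $f$ is feasible, Phase~2 is ordinary Ford-Fulkerson: repeatedly augment an $s$-to-$t$ path in $G_f$ by one unit, until none exists.

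The main obstacle is to prove that the augmenting structure required in Phase~1 always exists, and I would handle it by a flow-decomposition argument. Since both $f$ and $f^*(c)$ satisfy flow conservation at internal nodes, so does the integer signed difference $\delta = f^*(c) - f$, and hence $\delta$ decomposes into a collection of integer-weight simple paths (from $s$ to $t$ or from $t$ to $s$) and cycles in the \emph{signed residual graph} $D$ built from $\delta$: a forward arc $(u,v)$ of capacity $\delta(e)$ for each $e$ with $\delta(e) > 0$, and a backward arc $(v,u)$ of capacity $-\delta(e)$ for each $e$ with $\delta(e) < 0$. Whenever $f(u,v) > c(u,v)$ we have $\delta(u,v) < 0$, so the arc $(v,u)$ appears in $D$, and some element of the decomposition must pass through it. Since every arc of $D$ is also an arc of $G_f$ ($\delta(e) > 0$ forces $f(e) < c(e)$, providing the forward residual; $\delta(e) < 0$ forces $f(e) > 0$, providing the backward residual), that element is an augmenting structure of the required form. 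It can be located in $O(|E|)$ time by two breadth-first searches---one forward from $u$ and one backward from $v$---in $G_f$, checking whether $u$ reaches $v$ or whether $\{s,t\}$ connect appropriately through $(v,u)$.

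For the running-time bound, I would note that $I \leq ||f - f^*(c)||_1$, since $f^*(c)(e) \leq c(e)$ gives $f(e) - c(e) \leq f(e) - f^*(c)(e)$ whenever $f(e) > c(e)$; hence Phase~1 performs at most $||f - f^*(c)||_1$ iterations. A single Phase~1 iteration changes $\val{f}$ by at most one, so the feasible flow $\hat f$ produced at the end of Phase~1 satisfies $\val{\hat f} \geq \val{f} - I$. Consequently, the number of Phase~2 iterations is at most $\val{f^*(c)} - \val{\hat f} \leq (\val{f^*(c)} - \val{f}) + I \leq 2\,||f - f^*(c)||_1$, where I use $|\val{f^*(c)} - \val{f}| \leq ||f - f^*(c)||_1$, which follows from comparing outflows at $s$ edge by edge. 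Altogether, the algorithm performs $O(||f - f^*(c)||_1)$ iterations, each an $O(|E|)$-time BFS in $G_f$, matching the claimed bound.
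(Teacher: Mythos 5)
Your proof is correct, and it takes a genuinely different route in the feasibility phase. The paper makes $f$ feasible by decreasing $\bar f$ directly along a cycle or $s$-$t$ path in the original graph $G$, and certifies that such a cycle or path through an over-capacity edge exists by appealing to the integral flow decomposition of $\bar f$ \emph{itself}; in particular this step is entirely oblivious to $f^*(c)$ and only ever decreases $\operatorname{val}(\bar f)$. You instead work in the residual graph $G_f$ from the start, augmenting through the backward arc of an over-capacity edge, and certify existence by decomposing the signed difference $\delta = f^*(c)-f$ in the signed residual graph $D$ and observing that $D \subseteq G_f$ (since $\delta(e) > 0$ forces $f(e) < c(e)$, and $\delta(e) < 0$ forces $f(e) > 0$). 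Your argument yields a more unified picture, where both phases are residual-graph augmentations, and it also permits Phase~1 to \emph{increase} the flow value when the certifying structure is an $s$-$t$ path; the cost is a slightly more elaborate existence lemma that must reason about $f^*(c)$. The accounting at the end (at most $I \leqslant \eta$ Phase~1 iterations, each shifting the value by at most one, giving at most $(\operatorname{val}(f^*(c)) - \operatorname{val}(f)) + I \leqslant 2\eta$ Phase~2 iterations) matches the paper's and gives the same $O(|E|\cdot\eta)$ bound. One small remark for completeness: your Phase~1 search needs a forward BFS from $u$ and a backward BFS from $v$ in $G_f$ (which you have), and also single BFS reachability from $s$ and to $t$, so that all three cases (cycle $u \rightsquigarrow v$, path $s \rightsquigarrow v \to u \rightsquigarrow t$, path $t \rightsquigarrow v \to u \rightsquigarrow s$) can be tested in $O(|E|)$ total; this is a constant number of traversals and does not affect the bound.
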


Note that the above bound holds simultaneously for every maximum flow $f^*(c)$, which might not be unique. In other words, the prediction is good if it is close to at least one optimal solution.

One of the sought-after properties of learning-augmented algorithms is \emph{robustness}, i.e., retaining worst-case guarantees of classic algorithms even for arbitrarily bad predictions. However, in the case of running time bounds, robustness comes essentially for free (up to a multiplicative factor of $2$, vanishing in the asymptotic notation). Indeed, one can always run step-by-step an algorithm with predictions alongside the fastest known classic algorithm, stopping when either of them stops. Therefore, Theorem~\ref{thm:warmstart} paired with the recent $O(|E|^{1+o(1)})$ time algorithm for the maximum flow problem~\cite{ChenKLPPS22} actually leads to a robust learning-augmented algorithm with running time
\[O\big(|E| \cdot \min\{||f - f^*(c)||_1, |E|^{o(1)}\}\big).\]

Now we want to argue that predictions required by the above algorithm can be efficiently learned, in a PAC-learning sense. We assume that the underlying graph, as well as the choice of the source and sink nodes, are fixed. (This assumption is almost without loss of generality, because one can take the underlying graph to be a clique, with capacities zero for nonexistent edges; that, however, may cause a running time overhead, because of the increased number of edges.) Our goal is to prove that, given a joint distribution over edge capacities, we can efficiently learn a flow approximately minimizing the expected $\ell_1$ error over that distribution. We do it in two steps. First, in Section~\ref{sec:learning}, we prove Theorem~\ref{thm:learning}, giving an algorithm that finds an optimal flow prediction for a given set of samples. Next, in Section~\ref{sec:sample}, we prove Theorem~\ref{thm:sample}, arguing that, assuming a sufficient number of samples, such optimal flow for samples is approximately optimal for the whole distribution.

\begin{restatable}{theorem}{thmlearning}
\label{thm:learning}
Given a directed graph $G = (V, E)$, with source and sink $s, t \in V$, and a collection of $k$ lists of edge capacities $c_1, c_2, \ldots, c_k \in \mathbb{Z}_{\geqslant 0}^E$, one can find an optimal integral flow prediction for this collection, i.e.,
\[\hat{f} = \argmin \bigg\{ \frac{1}{k} \sum_{i \in [k]} ||f - f^*(c_i)||_1 \;\big|\; f : E \to \mathbb{Z}_{\geqslant 0} \textnormal{ satisfying flow conservation} \bigg\},\]
in time $O((k \cdot |E|)^{1+o(1)})$.
\end{restatable}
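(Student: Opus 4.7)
The plan is to reduce the learning problem to a single convex-cost flow instance and then invoke the recent fast min-cost flow algorithm of~\cite{ChenKLPPS22}. First I would compute the optimal flows $f^*(c_1), \ldots, f^*(c_k)$ for the $k$ sample instances, each in time $|E|^{1+o(1)}$, contributing $O((k|E|)^{1+o(1)})$. Having these at hand, the objective
\[
\frac{1}{k}\sum_{i\in[k]}\|f - f^*(c_i)\|_1 \;=\; \frac{1}{k}\sum_{e\in E}\sum_{i\in[k]}|f(e) - f^*(c_i)(e)|
\]
separates across edges, so the problem becomes: minimize $\sum_{e\in E} g_e(f(e))$ over integer $f:E\to\mathbb{Z}_{\geqslant 0}$ satisfying flow conservation, where $g_e(x) = \sum_{i\in[k]}|x - f^*(c_i)(e)|$ is convex and piecewise-linear in $x$ with at most $k$ breakpoints at the sorted sample values $y_{e,(1)} \leqslant \cdots \leqslant y_{e,(k)}$ and integer slopes $2j-k$ on the $j$-th segment. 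Unconstrained by conservation, the minimizer of $g_e$ is simply the median of the samples; conservation couples the edges and is what makes the problem nontrivial.

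Next I would model this as a standard min-cost flow instance on an expanded graph $G'$. For each edge $e=(u,v)$ of $G$, install $k+1$ parallel copies from $u$ to $v$: the $j$-th copy ($j=0,\ldots,k$) has capacity $y_{e,(j+1)}-y_{e,(j)}$ (with $y_{e,(0)}=0$ and $y_{e,(k+1)}$ set to a sufficiently large integer bound, e.g.\ $\max_i \val{f^*(c_i)}$) and per-unit cost $2j-k$. Because the per-unit costs are strictly increasing in $j$, any optimal flow saturates the lower-cost copies before using higher-cost ones, which reproduces exactly the convex piecewise-linear cost $g_e$ on the total flow $f(e)$ through the bundle. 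Add a zero-cost uncapacitated edge from $t$ to $s$ so that any $f$ satisfying flow conservation lifts to a circulation in $G'$ and vice versa. The graph $G'$ has $O(k|E|)$ edges with integer capacities and integer costs of magnitude at most $k$, and integrality of the optimum is preserved by the standard total-unimodularity of min-cost flow LPs.

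Finally I would run the $|E(G')|^{1+o(1)}$-time min-cost flow algorithm of~\cite{ChenKLPPS22} on $G'$; since the negative-cost copies form no negative cycle (each original edge contributes only forward copies, and the auxiliary $t\to s$ edge has cost $0$), the algorithm applies directly, or one can first pre-route along the negative-cost copies to reduce to the nonnegative-cost case. The total running time is $O((k|E|)^{1+o(1)})$ for both the $k$ max-flow computations and the single min-cost flow call. Reading off $f(e)$ as the total flow on the parallel bundle for $e$ gives the desired optimal integer prediction $\hat f$. The main technical point to verify carefully is the correctness of the bundle construction in Step 3, specifically that the monotonicity of the copy-costs forces min-cost flow to sweep through the breakpoints of $g_e$ in the right order, so that the total cost on $G'$ equals $\sum_e g_e(f(e))$ up to an additive constant; the rest is a direct composition of known ingredients.
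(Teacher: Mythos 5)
Your proposal is correct and follows essentially the same route as the paper: compute the $k$ optimal flows with the near-linear-time algorithm, observe that the objective is a separable convex piecewise-linear function with slopes $2j-k$ at the sorted breakpoints, replace each edge by $k+1$ parallel copies encoding those segments, and solve one min-cost flow instance on the resulting $O(k|E|)$-edge multigraph. The extra details you supply (the zero-cost $t$--$s$ edge to phrase it as a circulation, and the remark on integrality) are consistent with, and slightly more explicit than, the paper's argument.
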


\begin{restatable}{theorem}{thmsample}
\label{thm:sample}
Let $G = (V, E)$ be a directed graph, with source and sink $s, t \in V$, and let $c_1, c_2, \ldots, c_k \in \mathbb{Z}_{\geqslant 0}^E$, for $k = \Theta(c_{\max}^2 |E|^3 \log (c_{\max}|E|))$, be independent samples from a distribution $\mathcal{D}$, where $c_{\max} = \max_{c \in\supp(\mathcal{D}), e \in E} c(e)$.
Let $\hat{f} \in \mathbb{Z}_{\geqslant 0}^E$ be an optimal flow prediction for this collection of samples, as in Theorem~\ref{thm:learning}. Then, with high probability over the choice of the samples, the expected $\ell_1$ error of $\hat{f}$ over $\mathcal{D}$ is approximately minimum possible, i.e.,
\[
\mathbb{E}_{c \sim \mathcal{D}} ||\hat{f} - f^*(c)||_1 \leqslant
\min_f \mathbb{E}_{c \sim \mathcal{D}} ||f - f^*(c)||_1 + O(1),
\]
where the minimum is taken over functions $f \in \mathbb{Z}_{\geqslant 0}^E$ satisfying the flow conservation property.
\end{restatable}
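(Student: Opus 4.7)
The plan is a standard learning-theoretic uniform-convergence argument. I would restrict to a finite hypothesis class of bounded flow predictions, control the empirical-to-population deviation uniformly via Hoeffding's inequality and a union bound, and then apply the empirical risk minimization inequality. In what follows I fix once and for all a canonical choice of maximum flow $f^*(c)$ for every capacity vector $c$ (e.g.\ the lexicographically smallest one), so that the empirical and the population objectives are both well-defined functions of $c$.

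The first step is to show that one can restrict attention to the class
\[ H = \{ f : E \to \mathbb{Z}_{\geqslant 0} \mid f \text{ satisfies flow conservation and } \|f\|_1 \leqslant 2|E|c_{\max} \} \]
without changing either minimum. Indeed, $\|f^*(c)\|_1 \leqslant |E|c_{\max}$ always, so if $\|f\|_1 > 2|E|c_{\max}$ then $\|f - f^*(c)\|_1 \geqslant \|f\|_1 - \|f^*(c)\|_1 > |E|c_{\max} \geqslant \|0 - f^*(c)\|_1$, and the zero flow strictly beats $f$ on every single $c$ (hence also in expectation and on average over any sample). Consequently, the empirical minimizer $\hat f$ from Theorem~\ref{thm:learning} and any population minimizer $f^{\mathrm{opt}}$ both lie in $H$. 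Moreover, $|H| \leqslant (2|E|c_{\max}+1)^{|E|}$, so $\log|H| = O(|E|\log(|E|c_{\max}))$; and for all $f \in H$ and all $c \in \supp(\mathcal{D})$, the loss is bounded by $\|f - f^*(c)\|_1 \leqslant \|f\|_1 + \|f^*(c)\|_1 \leqslant 3|E|c_{\max}$.

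The second step is the concentration bound. For each fixed $f \in H$, the $k$ i.i.d.\ variables $\|f - f^*(c_i)\|_1 \in [0, 3|E|c_{\max}]$ satisfy Hoeffding, so their empirical average deviates from $\mathbb{E}_c\|f - f^*(c)\|_1$ by more than $\tfrac{1}{2}$ with probability at most $2\exp(-\Omega(k/(|E|c_{\max})^2))$. A union bound over $H$ makes this deviation at most $\tfrac{1}{2}$ for all $f \in H$ simultaneously with probability at least $1 - 2|H|\exp(-\Omega(k/(|E|c_{\max})^2))$, which is $1 - 1/\poly(|E|, c_{\max})$ once $k = \Theta\bigl((|E|c_{\max})^2 \log|H|\bigr) = \Theta(c_{\max}^2|E|^3 \log(c_{\max}|E|))$, matching the theorem. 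Conditioned on this uniform convergence event, the standard ERM chain
\[ \mathbb{E}_c\|\hat f - f^*(c)\|_1 \leqslant \widehat{\mathbb{E}}\|\hat f - f^*(c)\|_1 + \tfrac{1}{2} \leqslant \widehat{\mathbb{E}}\|f^{\mathrm{opt}} - f^*(c)\|_1 + \tfrac{1}{2} \leqslant \mathbb{E}_c\|f^{\mathrm{opt}} - f^*(c)\|_1 + 1 \]
completes the proof with additive slack $O(1) = 1$.

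The main technical ingredient is the reduction to the bounded class $H$: without the $\ell_1$ bound on candidate predictions, the loss range would blow up to $O(|E|^2 c_{\max})$ and the sample complexity to $\Theta(|E|^5 c_{\max}^2 \log(|E|c_{\max}))$, so this step is what secures the advertised $|E|^3$ dependence. No pseudo-dimension machinery is required since the restricted hypothesis class is already explicitly finite; the only further subtlety is the need for a consistent canonical choice of $f^*(c)$, so that the empirical objective solved by Theorem~\ref{thm:learning} and the population objective really refer to the same loss function, which is why I pin one down at the start.
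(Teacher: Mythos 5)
Your proposal is correct and follows essentially the same route as the paper: first bound the $\ell_1$ norm of any optimal prediction by $2c_{\max}|E|$ via comparison with the zero flow (the paper's Lemma~\ref{lem:fnorm}), then apply Hoeffding plus a union bound over the resulting finite class to get uniform convergence (the paper's Lemma~\ref{lem:sample}), and conclude with the standard ERM chain. The only differences are cosmetic (deviation $\tfrac12$ versus $1$, and your explicit remark about fixing a canonical $f^*(c)$, which the paper leaves implicit in its notation).
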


\subsection{Related work}
\label{sec:related}

\paragraph{Maximum flow algorithms.}

There are numerous algorithms for the maximum flow problem. The Ford-Fulkerson method~\cite{FordF56} is a starting point for many of them, and its vanilla version runs in weakly polynomial $O(|E|\cdot\val{f^*(c)})$ time for integral edge capacities. The strongly polynomial time algorithms, which also work for rational edge capacities, can be roughly split into three groups: augmenting paths algorithms (e.g.,~\cite{EdmondsK72, Dinic70}), push-relabel algorithms (e.g.,~\cite{GoldbergT86}), and pseudoflow algorithms (e.g.,~\cite{Hochbaum08}). Each of these groups contains algorithms with running time $\widetilde{O}(|V|\cdot|E|)$ that are widely used in practice, see, e.g.,~\cite{BoykovK04,FishbainHM16}. A long line of research on Laplacian solvers and interior-point methods, initiated by~\cite{SpielmanT04}, culminated recently with a (weakly polynomial) near-linear $O(|E|^{1+o(1)})$ time algorithm~\cite{ChenKLPPS22}.

In the light of this new development, it may seem that our learning-augmented algorithm is only relevant for very small prediction errors, namely $||f-f^*(c_i)||_1 \leqslant |E|^{1+o(1)}$. However, at this point it is not yet clear if the new near-linear time algorithm will lead to practical developments.\footnote{See \url{https://codeforces.com/blog/entry/100510} for a relevant discussion with an author of~\cite{ChenKLPPS22}.}

\paragraph{Learning-augmented algorithms.}

The idea of using predictions to improve performance of algorithms is not a new one, see, e.g.,~\cite{MahdianNS07}. However, the recent systematic study of such methods -- under the umbrella term on \emph{learning-augmented algorithms}, or simply \emph{algorithms with predictions} -- seems to have started with the works of Lykouris and Vassilvitskii~\cite{LykourisV21}, and Purohit, Svitkina, and Kumar~\cite{PurohitSK18}. Since then, the field developed rapidly, see~\cite{MitzenmacherV20} for a survey. So far, the majority of the works focus on online algorithms, where predictions help reduce uncertainty about the yet unseen part of the instance. There are, however, also works on, e.g., data structures~\cite{KraskaBCDP18}, streaming algorithms~\cite{HsuIKV19}, and sublinear algorithms~\cite{EdenINRSW21}. Apart from a simple example of binary search~\cite{LykourisV21}, until recently there were no works on improving algorithms running times using predictions. This has changed with the work of Dinitz et al.~\cite{Dinitz21}, and the recent followup work of Chen et al.~\cite{ChenSVZ22}.

\paragraph{Learning-augmented weighted bipartite matching.}

A direct inspiration for our approach is the work of Dinitz et al.~\cite{Dinitz21}. They study the maximum weighted bipartite matching problem, and propose to predict the dual\footnote{Recall that the matching problem can be formulated as a linear program, and every linear program has a corresponding dual program.} solution. They give a learning-augmented algorithm that solves the matching problem in $O(|E|\sqrt{|V|} \cdot \min\{\eta, \sqrt{|V|}\})$ time, where $\eta$ is the $\ell_1$ error of the predicted dual solution -- our Theorem~\ref{thm:warmstart} is an analogue of that result. They also show that, given an oracle access to a joint distribution over edge weights, one can efficiently learn a prediction minimizing the expected $\ell_1$ error over the distribution -- our Theorems~\ref{thm:learning} and~\ref{thm:sample} are together an analogue of that result.

The most apparent difference between their approach and ours is that they use a predicted dual solution and we use a predicted primal solution. The reason they state for focusing on the dual solution is that the primal solution is very volatile to small changes in the input. Let us note that this argument clearly applies to weighted problems (in particular, e.g., to the minimum cost flow problem) but it is not clear if it also applies to the maximum flow problem. Moreover, it is also not clear if the dual solution is indeed less volatile, even for weighted problems.

The second important difference is that they do not impose any constraints on predictions, while we require that the predicted solution satisfies the flow conservation property. This difference has the following consequences. First, their learning algorithm can be very simple -- the optimal prediction is just a coordinate-wise median over the solutions for the samples -- while we need to solve the minimum cost flow problem instead. Second, turning a prediction into a feasible solution is also harder for us, as we want to maintain the flow conservation property. On the other hand, once we have a feasible solution, the remaining part of our maximum flow algorithm is simple and easy to analyse, in contrast with their tailored primal-dual analysis for the analogous part of their algorithm.

In a recent independent work Chen et al.~\cite{ChenSVZ22} improve the running time of Dinitz et al.~for the matching problem, and extend their framework to a couple of other problems: the negative weights single-source shortest paths problem, the degree-constrained subgraph problem, and the minimum cost 0-1 flow problem. For all these problems they use predicted dual solutions. They also propose general learnability theorems, which imply what we prove in Appendix~\ref{sec:altsample} (see also a discussion below the proof of Lemma~\ref{lem:sample} for a comparison of these results with Theorem~\ref{thm:sample}).

\section{Warm-starting Ford-Fulkerson}
\label{sec:warmstart}

\thmwarmstart*

\begin{proof}

At first, the predicted flow $f$ does not necessarily satisfy the capacity constraints imposed by $c$, i.e., for some edges $e \in E$ it might happen that $f(e) > c(e)$. The algorithm consists of two steps. In the first step, it turns $f$ into $\bar{f}$ that satisfies the capacity constraints, while maintaining the flow conservation property. In other words, $\bar{f}$ is a feasible flow. Then, in the second step, the algorithm augments $\bar{f}$ to an optimal flow.

\vspace{-1em}\paragraph{First step: feasibility.} Recall that every integral flow decomposes into cycles and $s$-$t$ paths\footnote{I.e., there exists a collection $p_1, \ldots, p_k$ such that each $p_i$ is either a (simple) cycle or a (simple) $s$-$t$ path in~$G$, and $f(e) = \#\{i \in [k] \mid e \in p_i\}$ for every edge $e \in E$.} (see, e.g., \cite[Theorem~3.5]{AhujaMO93}).
The algorithm initializes $\bar{f}=f$. While there is an edge $e \in E$ with $\bar{f}(e) > c(e)$, the algorithm uses, e.g., depth-first search to find a cycle or an $s$-$t$ path containing $e$ (at least one of them is guaranteed to exist because of the integral flow decomposition), and decreases the flow $\bar{f}$ along this cycle/path by one unit. This keeps the invariant that $\bar{f}$ satisfies the flow conservation property. When the process is done, $\bar{f}$ satisfies also all the capacity constraints.

\vspace{-1em} \paragraph{Second step: optimization.} Now, the algorithm constructs the \emph{residual network} with respect to $\bar{f}$, i.e., the flow network $G_{\bar{f}} = (V, E_{\bar{f}})$ with edge set $E_{\bar{f}} = \{ (u, v) \mid (u, v) \in E \text { or } (v, u) \in E\}$ and residual capacities $c_{\bar{f}}(u, v) = (c(u, v) - f(u, v)) + f(v, u)$. Here, for notational simplicity, we assume that $c(u, v) = f(u, v) = 0$ if $(u, v) \not\in E$. 
Then, the algorithm runs the Ford-Fulkerson method~\cite{FordF56} on $G_{\bar{f}}$ to find a maximum flow $f^*(c_{\bar{f}})$ in time $O(|E| \cdot \val{f^*(c_{\bar{f}})})$.
Finally, $\bar{f} + f^*(c_{\bar{f}})$ is a maximum flow for the original edge capacities $c$, see, e.g., \cite[Property~2.6]{AhujaMO93}.

\vspace{-1em} \paragraph{Running time analysis.} Let $\delta = \sum_{e \in E} \max \{f(e) - c(e), 0\}$ be the total amount by which the flow prediction violates the capacity constraints. The algorithm makes at most $\delta$~iterations in the first step, and each iteration decreases the flow value $\val{\bar{f}}$ by at most one. We conclude that the first step runs in $O(|E|\cdot \delta)$ time, and that $\val{f} - \val{\bar{f}} \leqslant \delta$.

The second step of the algorithm runs in time 
\begin{align*}
O(|E| \cdot \val{f^*(c_{\bar{f}})})
 & = O(|E| \cdot (\val{f^*(c)} - \val{\bar{f}})) \\
 & = O(|E| \cdot ((\val{f^*(c)} - \val{f}) + (\val{f} - \val{\bar{f}}))).
\end{align*}
Let $\eta = ||f-f^*(c)||_1$ denote the prediction error. It is easy to see that $| \val{f^*(c)} - \val{f} | \leqslant \eta$, and that $\delta \leqslant \eta$, so, in particular, $\val{f} - \val{\bar{f}} \leqslant \delta \leqslant \eta$. Therefore, the running time of both steps of the algorithm can be bounded by $O(|E| \cdot \eta)$.
\end{proof}

\subsection{Alternative variant of the first step}

In this section we give an alternative variant of the first step of the above algorithm. The asymptotic running time remains the same, but, as we explain towards the end of this section, the alternative algorithm might be more efficient in practice.

Consider graph $\widetilde G = (V, \widetilde E)$ with $\widetilde E = \{(v, u) \in V \times V \mid (u, v) \in E\}$, i.e., a copy of $G$ with reversed edges. Set capacities to $\tilde c(v, u) = f(u, v)$. Note that the first step of the original algorithm essentially finds an integral $t$-$s$ flow $\tilde f$ in $\widetilde G$ such that
\begin{enumerate}[label=(\roman*),nosep]
    \item if $f(u, v) > c(u, v)$, then $\tilde f(v, u) \geqslant f(u, v) - c(u, v)$, for every $(u, v) \in E$;
    \item $\val{\tilde f} \leqslant \delta$.
\end{enumerate}
At the end of the first step $\bar f(u, v) = f(u, v) - \tilde f(v, u)$. In this section we give an alternative way to compute such $\tilde f$.

Add to $\widetilde G$ edge $(s, t)$, and set $\tilde c(s, t) = \delta$. Now, the problem of finding a $t$-$s$ flow satisfying (i) and (ii) becomes the problem of finding a \emph{circulation}\footnote{A circulation is defined similarly to a flow. The only exception is that there are no designated source and sink nodes, and hence the flow conservation property has to be satisfied for all the nodes of the graph (see, e.g., \cite[Section~1.2]{AhujaMO93}).} satisfying (i). This problem -- of finding a circulation with lower bounds -- can be reduced to a problem of finding a maximum flow (without lower bounds) in a graph with the maximum flow value equal to the sum of all lower bounds, see, e.g., \cite[Section~6.7]{AhujaMO93}. The reduction works as follows.

First, add to $\widetilde G$ two new nodes $\tilde s$ and $\tilde t$. Next, for each edge $e = (u, v) \in E$ that violates the capacity constraint let $\delta_e = f(u,v) - c(u,v) > 0$ be the excess flow for this edge; add to $\widetilde G$ two edges, $(\tilde s, u)$ and $(v, \tilde t)$, set their capacities to $\tilde c(\tilde s, u) = \tilde c(v, \tilde t) = \delta_e$, and decrease the capacity of edge $(v, u)$ by $\delta_e$, so that $\tilde c(v, u) = c(u, v)$. This ends the description of the graph constructed in the reduction.

Note that the total capacity of edges leaving $\tilde s$ equals the total capacity of edges entering $\tilde t$ equals $\delta = \sum_{e \in E} \max \{f(e) - c(e), 0\}$. As we will see in a moment, the existence of a flow saturating these edges is equivalent to the existence of a circulation satisfying the lower bounds -- which is guaranteed to exist because the original first step of the algorithm finds such a circulation.

After constructing $\widetilde G$ as above, the alternative first step proceeds as follows. The algorithm computes a maximum $\tilde s$-$\tilde t$ flow $\tilde f$ in $\widetilde G$, using Ford-Fulkerson method. Then, for each edge $e = (u, v) \in E$ that violates the capacity constraint (in the original graph $G$), the algorithm first removes the saturated edges $(\tilde s, u)$ and $(v, \tilde t)$ from $\widetilde G$. Note that now nodes $u$ and $v$ do not satisfy the flow conservation property, namely node $v$ has an excess of $\delta_e$ units of incoming flow and node $u$ has a deficit of $\delta_e$ units of incoming flow. The algorithm restores the flow conservation property by increasing flow $\tilde f(v, u)$ by $\delta_e$ units, and therefore it ensures that the lower bound for this edge is satisfied. This procedure essentially proves the equivalence of the existence of a flow saturating sink and source edges and the existence of a suitable circulation.

This ends the description of the alternative algorithm. Let us analyze its running time. Graph $\widetilde G$ has $O(|E|)$ edges and can be constructed in $O(|E|)$ time. Since $\val{\tilde f} = \delta$, the Ford-Fulkerson method runs in $O(|E|\cdot \delta)$ time. Finally, transforming $\tilde f$ to $\bar f$ takes $O(|E|)$ time. Therefore, the total running time of $O(|E| \cdot \delta)$ remains unchanged compared to the original first step of the algorithm. However, the alternative algorithm differs from the original one in that all the computations that take more than $O(|E|)$ time can be delegated to one of many available highly optimized implementations of maximum flow algorithms. 

Finally, we remark that a similar trick -- for handling edges initialized with a flow exceeding their capacities -- was already proposed, albeit without provable running time guarantees, in the context of repeatedly solving similar minimum cut instances in a computer vision application~\cite{KohliT05}. That trick however only allows computing the maximum flow value and a corresponding minimum cut, but not the flow itself.

\section{Learning an optimal prediction}
\label{sec:learning}

\thmlearning*

\begin{proof}
The first step of the learning algorithm is to compute a maximum flow $f^*(c_i)$ for each $i \in [k]$. Using the recent near-linear time algorithm~\cite{ChenKLPPS22}, this step can be completed in $O(k\cdot|E|^{1+o(1)})$ time in total.

Now, the goal is to find an integral flow $f$ (satisfying the flow conservation property) that minimizes $\sum_{i \in [k]} \sum_{e \in E} |f(e) - f^*(c_i)(e)| = \sum_{e \in E} \sum_{i \in [k]} |f(e) - f^*(c_i)(e)|$. For an edge $e \in E$, let $\cost_e(x) = \sum_{i \in [k]} |x - f^*(c_i)(e)|$ denote the contribution of $f(e)$ to the minimization objective, which now can be written simply as $\sum_{e \in E} \cost_e(f(e))$.

Let us analyse how the function $\cost_e(x)$ behaves. Let $x_1 \leqslant x_2 \leqslant \cdots \leqslant x_k$ denote the sorted elements of the (multi-)set $\{f^*(c_1)(e), f^*(c_2)(e), \ldots, f^*(c_k)(e)\}$. Clearly, $\cost_e(0) = \sum_{i \in [k]} x_i$. For $x \in [0, x_1]$, the contribution $\cost_e(x)$ is a decreasing linear function with slope $-k$. For $x \in [x_1, x_2]$, the slope is $-k+2$. More generally, for $x \in [x_i, x_{i+1}]$ the slope is $2i - k$, because increasing the flow by $\delta$ increases also by $\delta$ each of the first $i$ summands, and decreases by the same amount each of the remaining $(k-i)$ summands in the sum $\sum_{i \in [k]} |x - x_i|$. Hence, $\cost_e(x)$ is piecewise linear and convex, and the overall goal is to find a flow minimizing a separable piecewise linear convex cost function.

The above problem can be reduced to the standard minimum cost flow problem~\cite[Chapter 14]{AhujaMO93}. The reduction works as follows. For notational simplicity, let $x_0=0$ and $x_{k+1}=+\infty$. Replace each edge $e \in E$ with $k+1$ parallel edges $e_0, \ldots, e_k$, and let edge $e_i$ have capacity $x_{i+1}-x_i$ and cost (of sending one unit of flow) equal to $2i-k$. It is easy to observe that any optimal solution to the minimum cost flow problem in the constructed multigraph uses some prefix of the cheapest parallel edges for each $e \in E$, and the total cost of such prefix behaves exactly like $\cost_e$. Since all the introduced capacities are integral, it is guaranteed that there exists an optimal integral solution. The multigraph has $(k+1)\cdot|E|$ edges, hence the minimum cost flow can be found in $O((k\cdot|E|)^{1+o(1)})$ time~\cite{ChenKLPPS22}.
\end{proof}

\section{Sample complexity}
\label{sec:sample}

\thmsample*

For a flow prediction $f \in \mathbb{Z}_{\geqslant 0}^E$, let us use
\begin{align*}
    \cost_{c_1,\ldots,c_k}(f) &= \frac{1}{k} \sum_{i \in [k]} || f - f^*(c_i)||_1, \\
    \cost_{\mathcal{D}}(f) &= \mathbb{E}_{c \sim \mathcal{D}} || f - f^*(c)||_1
\end{align*}
to denote the $\ell_1$ error of $f$ on the samples and on the distribution, respectively. We will use Hoeffding's inequality to prove that the number of samples in Theorem~\ref{thm:sample} is large enough for $\cost_{c_1,\ldots,c_k}(f)$ to be a good approximation of $\cost_{\mathcal{D}}(f)$, with high probability for all $f$'s simultaneously.

\begin{theorem}[Hoeffding's inequality~\cite{Hoeffding63}]
Let $X_1, \ldots, X_k$ be independent random variables with values from $0$ to $U$, and let $S = X_1 + \cdots + X_k$ denote their sum. Then, for all $t > 0$,
\[P\big( | S - \mathbb{E} S | \geqslant t \big) \leqslant 2 \cdot \exp(-2t^2/kU^2).\]
\end{theorem}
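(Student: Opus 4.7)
The plan is to prove Hoeffding's inequality via the classical Chernoff bounding technique combined with Hoeffding's lemma on the moment generating function of a bounded, centered random variable. First I would reduce to bounding the one-sided tail $P(S - \mathbb{E} S \geqslant t)$, since the lower tail follows by applying the same argument to the variables $U - X_i$ and the factor of $2$ in the conclusion comes from a union bound over the two tails.

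For the one-sided bound, I would fix an arbitrary $\lambda > 0$ and apply Markov's inequality to the nonnegative random variable $\exp(\lambda(S - \mathbb{E} S))$, obtaining
\[
P(S - \mathbb{E} S \geqslant t) \;\leqslant\; e^{-\lambda t}\, \mathbb{E}\big[e^{\lambda(S - \mathbb{E} S)}\big] \;=\; e^{-\lambda t}\, \prod_{i=1}^{k} \mathbb{E}\big[e^{\lambda(X_i - \mathbb{E} X_i)}\big],
\]
where the factorization uses independence. The goal then reduces to bounding each factor in the product.

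The main obstacle, and the technical heart of the proof, is establishing Hoeffding's lemma: for a random variable $Y$ with $\mathbb{E} Y = 0$ and $a \leqslant Y \leqslant b$, one has $\mathbb{E}[e^{\lambda Y}] \leqslant \exp(\lambda^2 (b-a)^2/8)$. I would prove this by using the convexity of $y \mapsto e^{\lambda y}$ to write
\[
e^{\lambda y} \;\leqslant\; \frac{b - y}{b - a}\, e^{\lambda a} + \frac{y - a}{b - a}\, e^{\lambda b} \qquad \text{for } y \in [a,b],
\]
taking expectations (using $\mathbb{E} Y = 0$) to get an upper bound depending only on $\lambda, a, b$, and then showing that the logarithm of this bound, viewed as a function of $\lambda$, is at most $\lambda^2 (b-a)^2 / 8$. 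The latter follows by a Taylor expansion around $\lambda = 0$: the function and its first derivative vanish at $\lambda = 0$, and its second derivative can be bounded by $(b-a)^2/4$ (recognizing it as the variance of a Bernoulli-type auxiliary distribution, which is maximized at $1/4$).

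Applied to $Y_i = X_i - \mathbb{E} X_i$, which lies in an interval of length at most $U$, Hoeffding's lemma gives $\mathbb{E}[e^{\lambda Y_i}] \leqslant \exp(\lambda^2 U^2 / 8)$. Combining with the Chernoff bound yields
\[
P(S - \mathbb{E} S \geqslant t) \;\leqslant\; \exp\!\big(-\lambda t + \lambda^2 k U^2 / 8\big).
\]
Finally, I would optimize the free parameter $\lambda > 0$; the right-hand side is minimized at $\lambda = 4 t / (k U^2)$, producing the bound $\exp(-2 t^2 / (k U^2))$. Doubling to account for the symmetric lower tail gives the stated inequality. I expect the convexity-plus-Taylor argument for Hoeffding's lemma to be the only nonroutine step; everything else is a standard application of Markov's inequality and independence.
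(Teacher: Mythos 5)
Your proof is correct: the reduction of the two-sided bound to one tail (via a union bound, applying the argument to $U - X_i$ for the lower tail), the Chernoff bounding step using Markov's inequality and independence, Hoeffding's lemma via convexity of the exponential and a second-order Taylor bound on the log-moment-generating function, and the optimization at $\lambda = 4t/(kU^2)$ all go through and yield exactly the stated constant $2t^2/(kU^2)$. Note, however, that the paper does not prove this statement at all -- it is imported as a black-box citation to Hoeffding's 1963 paper -- so there is no proof in the paper to compare against; what you have reconstructed is the standard (and essentially Hoeffding's original) argument, which is entirely adequate here.
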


To use the inequality, first we need a bound on the values of the considered functions.

\begin{lemma}
\label{lem:fnorm}
Any optimal flow prediction $f$ satisfies $||f||_1 \leqslant 2 c_{\max} |E|$.
\end{lemma}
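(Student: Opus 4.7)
The plan is to compare any candidate prediction $f$ against the trivial prediction, namely the zero flow $f_0 \equiv 0$, which trivially satisfies flow conservation and is therefore a valid prediction. First I would observe that for any capacity vector $c$, the maximum flow $f^*(c)$ satisfies $f^*(c)(e) \leqslant c(e) \leqslant c_{\max}$ for every $e \in E$, so $\|f^*(c)\|_1 \leqslant c_{\max} |E|$. In particular, $\|f_0 - f^*(c)\|_1 = \|f^*(c)\|_1 \leqslant c_{\max}|E|$, and averaging over the samples (or taking the expectation over $\mathcal{D}$) yields $\cost(f_0) \leqslant c_{\max}|E|$.

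Next, I would argue the contrapositive: suppose $\|f\|_1 > 2 c_{\max}|E|$. By the triangle inequality in the form $\|f - f^*(c)\|_1 \geqslant \|f\|_1 - \|f^*(c)\|_1$, we obtain
\[
\|f - f^*(c)\|_1 \;\geqslant\; \|f\|_1 - c_{\max}|E| \;>\; c_{\max}|E|
\]
for every $c$ in the support. Averaging this bound over the samples $c_1, \ldots, c_k$ (or taking expectation under $\mathcal{D}$) gives $\cost(f) > c_{\max}|E| \geqslant \cost(f_0)$, so $f$ is strictly worse than the zero prediction and therefore cannot be optimal.

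The argument is the same whether ``optimal'' refers to optimality on the sample multiset (as in Theorem~\ref{thm:learning}) or optimality over the distribution (as in Theorem~\ref{thm:sample}), since in both cases the cost is a convex combination / expectation of the per-instance errors $\|f - f^*(c)\|_1$, and the per-instance inequalities above transfer directly to the aggregated quantity. There is no real obstacle here; the only thing to be careful about is to state the bound $\|f^*(c)\|_1 \leqslant c_{\max}|E|$ cleanly, since it is the only nontrivial input, and to phrase the contradiction so it covers both interpretations of ``optimal'' simultaneously.
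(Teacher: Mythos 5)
Your proposal is correct and follows essentially the same route as the paper's proof: compare against the all-zero prediction, bound $\|f^*(c)\|_1 \leqslant c_{\max}|E|$, and use the triangle inequality to show that any $f$ with $\|f\|_1 > 2c_{\max}|E|$ has strictly larger cost. Your explicit remark that the argument covers both sample-optimality and distribution-optimality is a small but welcome clarification, since the paper does invoke the lemma for both.
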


Let us note that Lemma~\ref{lem:fnorm} is actually nontrivial. Even though $||f^*(c_i)||_\infty \leqslant c_{\max}$ for every $i \in [k]$, it may happen that $||f||_\infty > c_{\max}$ because of the flow conservation constraint, e.g., when multiple disjoint paths end at a single node and force a single edge going out of that node to have a flow larger than $c_{\max}$.

\begin{proof}[Proof of Lemma~\ref{lem:fnorm}]
For every $c \in \supp(\mathcal{D})$, we have $||c||_1 \leqslant c_{\max}|E|$, and, since $0 \leqslant f^*(c) \leqslant c$, then also $||f^*(c)||_1 \leqslant c_{\max}|E|$. Moreover, by the triangle inequality, $||f-f^*(c)||_1 + ||f^*(c)||_1 \geqslant ||f||_1$, and thus $||f-f^*(c)||_1 \geqslant ||f||_1 - c_{\max}|E|$. If $||f||_1 > 2c_{\max}|E|$, then $||f-f^*(c)||_1 > c_{\max}|E|$ for every $c \in \supp(\mathcal{D})$, and thus also $\cost_{\mathcal{D}}(f) = \mathbb{E}_{c \sim \mathcal{D}} ||f-f^*(c)||_1 > c_{\max}|E|$.

At the same time, if we consider the all-zero vector as a flow prediction, we have $||0-f^*(c)||_1 = ||f^*(c)||_1 \leqslant c_{\max}|E|$, for every $c \in \supp(\mathcal{D})$, and thus also $\cost_{\mathcal{D}}(0) = \mathbb{E}_{c \sim \mathcal{D}} ||0-f^*(c)||_1 \leqslant c_{\max}|E|$. It follows that $f$ could not be optimal if $||f||_1 > 2 c_{\max} |E|$.
\end{proof}

Now we are ready to apply Hoeffding's inequality in order to prove the following lemma.

\begin{restatable}{lemma}{lemsample}\label{lem:sample}
With high probability over the choice of the samples, for all $f \in \mathbb{Z}_{\geqslant 0}^E$ satisfying the flow conservation property and such that $||f||_1 \leqslant 2c_{\max}|E|$ it holds that
\[| \cost_{c_1,\ldots,c_k}(f) - \cost_{\mathcal{D}}(f) | \leqslant 1.\]
\end{restatable}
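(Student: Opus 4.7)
The plan is to apply Hoeffding's inequality pointwise for a single flow prediction $f$, and then union-bound over the (finite) set of flow predictions admitted by the norm bound $\|f\|_1 \leqslant 2c_{\max}|E|$.

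First I would fix such an $f$ and consider the independent random variables $X_i = \|f - f^*(c_i)\|_1$, one per sample. By the triangle inequality, $X_i \leqslant \|f\|_1 + \|f^*(c_i)\|_1 \leqslant 2 c_{\max}|E| + c_{\max}|E| = 3 c_{\max}|E|$, so I can take $U = 3 c_{\max}|E|$ in Hoeffding's inequality. Since $\frac{1}{k}\sum_i X_i = \cost_{c_1,\ldots,c_k}(f)$ and $\mathbb{E}\frac{1}{k}\sum_i X_i = \cost_{\mathcal{D}}(f)$, setting $t = k$ yields
\[
P\bigl(|\cost_{c_1,\ldots,c_k}(f) - \cost_{\mathcal{D}}(f)| \geqslant 1\bigr) \leqslant 2 \exp\!\left(-\frac{2k}{9 c_{\max}^2 |E|^2}\right).
\]

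Next I would bound the number of candidate flows. Any $f \in \mathbb{Z}_{\geqslant 0}^E$ with $\|f\|_1 \leqslant 2 c_{\max}|E|$ satisfies $f(e) \in \{0, 1, \ldots, 2 c_{\max}|E|\}$ on every edge, so the crude count gives at most $(2 c_{\max}|E| + 1)^{|E|}$ such vectors; the flow-conservation restriction only shrinks this set. A union bound therefore yields a total failure probability of at most
\[
2 \cdot (2 c_{\max}|E| + 1)^{|E|} \cdot \exp\!\left(-\frac{2k}{9 c_{\max}^2 |E|^2}\right).
\]
Taking logarithms and plugging in $k = \Theta(c_{\max}^2 |E|^3 \log(c_{\max}|E|))$ with a sufficiently large hidden constant makes the exponential term dominate the combinatorial factor by an arbitrarily large polynomial margin in $c_{\max}|E|$, giving the desired high-probability conclusion.

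The only real obstacle is the choice of a discretisation/covering argument for the set of predictions: a continuous or unbounded parameter space would require a net argument, but here Lemma~\ref{lem:fnorm} already confines the optimal prediction to an $\ell_1$-ball, and integrality of $f$ makes this ball finite with a size that is merely exponential in $|E|$. This is precisely why the sample bound carries an $|E|^3$ factor (one $|E|$ from $U^2 = O(c_{\max}^2|E|^2)$ in Hoeffding, one $|E|$ from the $\log$ of the cardinality of the cover) together with the $\log(c_{\max}|E|)$ factor; everything else is routine.
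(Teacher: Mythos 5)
Your proposal is correct and follows essentially the same route as the paper: a pointwise Hoeffding bound with $U = 3c_{\max}|E|$ (the paper normalizes the $X_i$ by $k$ and takes $t=1$, which is equivalent to your unnormalized variables with $t=k$), followed by the same $(2c_{\max}|E|+1)^{|E|}$ count of integral candidates and a union bound. The accounting of where each factor in $k$ comes from also matches the paper's calculation.
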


\begin{proof}
For a fixed $f$, satisfying the conditions of the lemma, let $X_i = \frac{1}{k} ||f - f^*(c_i)||_1$. We have that $||f - f^*(c_i)||_1 \leqslant ||f||_1 + ||f^*(c_i)||_1 \leqslant (2+1) \cdot c_{\max} |E|$, so the random variable $X_i$ has values from $0$ to $3 c_{\max}|E| / k]$. Clearly, $\cost_{c_1,\ldots,c_k}(f) = X_1 + \cdots + X_k$, and $\mathbb{E} \cost_{c_1,\ldots,c_k}(f)= \cost_{\mathcal{D}}(f)$. Applying Hoeffding's inequality, with $t=1$, we get that
\begin{align*}
P\big( | \cost_{c_1,\ldots,c_k}(f) - \cost_{\mathcal{D}}(f) | \geqslant 1 \big) & \leqslant 2 \cdot \exp\bigg(-\frac{2}{k \cdot (3 c_{\max}|E| / k)^2}\bigg) \\
& = \exp(-\Theta(\nicefrac{k}{(c_{\max}|E|)^2})) \\
&= \exp(-\Theta(|E|\log(c_{\max}|E|))).
\end{align*}

Let $\mathcal{F}$ denote the set of all $f$'s satisfying the conditions of the lemma. We can upper-bound the number of such $f$'s by $|\mathcal{F}| \leqslant (2c_{\max}|E| + 1)^{|E|} = \exp(\Theta(|E| \log (c_{\max}|E|)))$. To finish the proof, note that
\begin{align*}
|\mathcal{F}| \cdot \poly(c_{\max}|E|) & \leqslant \exp(\Theta(|E| \log (c_{\max}|E|))) \cdot \exp(\Theta( \log (c_{\max}|E|))) \\ &= \exp(\Theta(|E| \log (c_{\max}|E|))),
\end{align*}
and hence we can take the union bound to conclude that with high probability it holds that $| \cost_{c_1,\ldots,c_k}(f) - \cost_{\mathcal{D}}(f) | \leqslant 1$ for all $f \in \mathcal{F}$ simultaneously.
\end{proof}

Let us remark that the above proof of Lemma~\ref{lem:sample} crucially relies on the fact that the set of possible optimal flow predictions is finite -- because they are integral and bounded -- and therefore we can use the union bound. Dinitz et al.~\cite[Section~3.3 in their supplemental material]{Dinitz21} give a proof of an analogous result regarding learning optimal dual solution for the weighted bipartite matching problem. Their proof is more complex than ours, it uses the notion of pseudo-dimension, but thanks to that it works also for fractional predictions. In Appendix~\ref{sec:altsample} we give a proof of alternative version of Lemma~\ref{lem:sample}, modelled after the proof of Dinitz et al., that generalizes to fractional flows but looses a small factor $\log |E|$ in the sample complexity.

With Lemma~\ref{lem:sample} at hand, it takes a standard argument to prove Theorem~\ref{thm:sample}.

\begin{proof}[Proof of Theorem~\ref{thm:sample}]
Let $\hat{f}$ and $\tilde{f}$ be optimal flow predictions for the samples and for the whole distribution, respectively. By Lemma~\ref{lem:fnorm}, $||\hat{f}||_1, ||\tilde{f}||_1 \leqslant 2c_{\max}|E|$, and hence
Lemma~\ref{lem:sample} applies.
Note that it is crucial that Lemma~\ref{lem:sample} holds with high probability for all $f$'s, because $\hat{f}$ is chosen after the samples are drawn from $\mathcal{D}$.
We finish the proof with the following chain of inequalities.
\[
\cost_{\mathcal{D}}(\hat{f}) 
\underset{\mathclap{\substack{\uparrow\\\text{Lemma~\ref{lem:sample}}}}}{\leqslant}
\cost_{c_1,\ldots,c_k}(\hat{f}) + 1
\underset{\mathclap{\substack{\uparrow\\[-2pt]\text{because $\hat{f}$ is optimal for $c_1, \ldots, c_k$}}}}{\leqslant}
\cost_{c_1,\ldots,c_k}(\bar{f}) + 1
\underset{\mathclap{\substack{\uparrow\\\text{Lemma~\ref{lem:sample}}}}}{\leqslant}
\cost_{\mathcal{D}}(\bar{f}) + 2.
\]
\end{proof}

\section{Limitations and open problems}
\label{sec:limitations}

\paragraph{Representation error.}
We do prove that a prediction with a small $\ell_1$ error can be used to speed up maximum flow computation, and that given a distribution over flow networks one can learn a prediction minimizing the $\ell_1$ error. However, we do not answer the question of what makes a distribution have such a minimum that is actually small. There seems to be no standard approach to address this type of question, and the related works~\cite{Dinitz21,ChenSVZ22} do not address it either.

\paragraph{Dropping flow conservation constraints.}
One could consider a similar framework to ours but without the requirement that the predicted solution has to satisfy the flow conservation property. That would make a) the learning algorithm simpler (it would be sufficient to output the coordinate-wise median), b) the minimum $\ell_1$ error for a distribution smaller, and c) the capacity constraint fixing step of the learning-augmented maximum-flow algorithm simpler (it would be sufficient to clip the predicted flows to the actual capacities). However, having no flow conservation guarantee at the warm start, the second step of the algorithm would have to deal with both nodes with excess and deficit flow. The pseudoflow algorithm~\cite{Hochbaum08} does work in such a setting -- so it seems a promising starting point for a learning-augmented algorithm in this modified framework -- but we were not able to analyse its performance in terms of the prediction error.

\section*{Acknowledgments}
We would like to thank Alexandra Lassota, Sai Ganesh Nagarajan, and Moritz Venzin for helpful discussion.

\bibliographystyle{plainurl}
\bibliography{main}

\clearpage

\appendix

\section{Sample complexity via pseudo-dimension}
\label{sec:altsample}

In this section we give an alternative proof of a variant of Lemma~\ref{lem:sample}, modelled after a corresponding proof by Dinitz et al.~\cite[Section~3.3 in their supplemental material]{Dinitz21}. Let us recall Lemma~\ref{lem:sample} first.

\lemsample*

Note that $k$ in Lemma~\ref{lem:sample} refers to the number of samples in Theorem~\ref{lem:sample}, i.e., $k = \Theta(c_{\max}^2 |E|^3 \log (c_{\max}|E|))$.
As mentioned in Section~\ref{sec:sample}, the technique of Dinitz et al.~lets us prove a result that also applies to fractional flows but gives a slightly worse (by $\log E$ factor) sample complexity. Specifically, we will prove the following lemma.

\begin{lemma}
\label{lem:altsample}
For $k=\Theta(c_{\max}^2 |E|^3 \log (c_{\max} |E|) \log E)$ samples, with high probability over the choice of the samples, for all $f \in \mathbb{R}_{\geqslant 0}^E$ satisfying the flow conservation property and such that $||f||_1 \leqslant 2c_{\max}|E|$ it holds that
\[| \cost_{c_1,\ldots,c_k}(f) - \cost_{\mathcal{D}}(f) | \leqslant 1.\]
\end{lemma}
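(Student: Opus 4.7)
The plan is to follow the pseudo-dimension approach of Dinitz et al., replacing the explicit union bound of Lemma~\ref{lem:sample} by a uniform convergence argument that no longer requires the parameter space to be finite. Define the function class
\[ \mathcal{H} = \bigl\{ h_f : c \mapsto ||f - f^*(c)||_1 \;\big|\; f \in \mathbb{R}_{\geqslant 0}^E,\ f \text{ satisfies flow conservation, and } ||f||_1 \leqslant 2c_{\max}|E| \bigr\}. \]
By Lemma~\ref{lem:fnorm} and the triangle inequality, every $h_f \in \mathcal{H}$ takes values in $[0, 3c_{\max}|E|]$. A standard Pollard-style uniform convergence theorem states that $\Theta(U^2/\epsilon^2 \cdot (\pdim(\mathcal{H}) \log(U/\epsilon) + \log(1/\delta)))$ samples suffice for $| \cost_{c_1,\ldots,c_k}(f) - \cost_{\mathcal{D}}(f) | \leqslant \epsilon$ to hold for all $h_f \in \mathcal{H}$ simultaneously with probability at least $1-\delta$. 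Plugging in $U = 3c_{\max}|E|$, $\epsilon = 1$, and $\delta = 1/\poly(|E|)$, together with the pseudo-dimension bound established below, yields exactly the stated sample size.

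The core of the proof is to show $\pdim(\mathcal{H}) = O(|E| \log |E|)$. Fix any witness sequence $(c_1, r_1), \ldots, (c_k, r_k)$, and count the distinct sign patterns $(\mathbf{1}[h_f(c_j) \geqslant r_j])_{j=1}^k$ realized as $f$ ranges over $\mathbb{R}^{|E|}$. The key observation is that $h_f(c_j) = \sum_{e \in E} |f(e) - f^*(c_j)(e)|$ is piecewise linear in $f$, with linear pieces determined by the arrangement of the $k|E|$ hyperplanes $\{f(e) = f^*(c_j)(e)\}_{e \in E,\, j \in [k]}$ in the parameter space. This arrangement has at most $O((k|E|)^{|E|})$ full-dimensional cells; within any single cell each $h_f(c_j) - r_j$ is an affine function of $f$, so at most $O(k^{|E|})$ further sign patterns arise per cell (by the standard bound on arrangements of $k$ hyperplanes in $\mathbb{R}^{|E|}$). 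Multiplying, the total number of sign patterns is bounded by $(k|E|)^{O(|E|)}$. Shattering $k$ points would require $2^k \leqslant (k|E|)^{O(|E|)}$, which forces $k = O(|E| \log |E|)$, as desired.

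The main technical obstacle is the pseudo-dimension bound itself; once it is in hand the remainder of the argument is a black-box invocation of uniform convergence. A minor subtlety is that $f^*(c)$ is not uniquely defined when the maximum flow is non-unique, but the proof goes through for any fixed (measurable) tie-breaking rule, since the resulting $\cost_{\mathcal{D}}(f)$ is the same. The extra $\log |E|$ factor over Lemma~\ref{lem:sample} is precisely the price of swapping the finite-set counting $\log |\mathcal{F}| = \Theta(|E| \log(c_{\max}|E|))$ for the pseudo-dimension bound $O(|E| \log |E|)$ inside the sample complexity formula.
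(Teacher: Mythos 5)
Your proof is correct and follows the same overall strategy as the paper's: define the class $\mathcal{H}=\{h_f\}$, bound its range by $3c_{\max}|E|$, bound $\pdim(\mathcal{H})$ by $O(|E|\log|E|)$, and invoke a Pollard/Anthony--Bartlett uniform convergence theorem with $\epsilon=1$. The one genuine difference is in how the pseudo-dimension bound is obtained. The paper reduces $\pdim(\mathcal{H})$ by monotonicity to the class $\mathcal{H}_{|E|}=\{x\mapsto\|y-x\|_1\}$ (observing that a set $\{c_i\}$ shattered by $\mathcal{H}$ yields a set $\{f^*(c_i)\}$ shattered by $\mathcal{H}_{|E|}$) and then cites the $O(n\log n)$ bound of Dinitz et al.; you instead re-derive that bound from first principles by counting sign patterns over the arrangement of the $k|E|$ hyperplanes $f(e)=f^*(c_j)(e)$ refined by the $k$ affine threshold conditions, getting $(k|E|)^{O(|E|)}$ patterns and hence $k=O(|E|\log|E|)$. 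Both routes are valid; yours is self-contained, the paper's is shorter. Two small technical remarks: (i) the range bound $h_f(c)\leqslant 3c_{\max}|E|$ holds only for $c\in\supp(\mathcal{D})$, so to literally satisfy the hypothesis $\mathcal{F}\subseteq[0,U]^X$ of the uniform convergence theorem the paper truncates to $\bar h_f=\min\{h_f,U\}$, a step you should include; (ii) your justification that different tie-breaking rules for $f^*(c)$ give ``the same'' $\cost_{\mathcal{D}}(f)$ is not quite right --- they need not --- but the lemma only requires one fixed rule used consistently in both the empirical and population costs, so the argument is unaffected.
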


To prove the lemma, first let us recall a standard tool from the statistical learning theory, the pseudo-dimension, which is a generalization of the VC-dimension to real-valued functions.

\begin{definition}[cf.~\cite{Pollard84}]
Let $\mathcal{F} \subseteq \mathbb{R}^X$ be a set of real-valued functions from a domain $X$.
We say that a set $S = \{x_1, x_2, \ldots, x_s\} \subseteq X$ is \emph{shattered} by $\mathcal{F}$ if there exist thresholds $t_1, t_2, \ldots, t_s \in \mathbb{R}$ such that for each $I \in 2^{[s]}$ there exists a function $f_I \in \mathcal{F}$ such that $I = \{i \in [s] \mid f_I(x_i) \leqslant t_i \}$.
The \emph{pseudo-dimension} of $\mathcal{F}$, denoted by $\pdim(\mathcal{F})$, is the size $|S|$ of a largest set $S \subseteq X$ that is shattered by $\mathcal{F}$.
\end{definition}

\begin{theorem}[cf.~\cite{Anthony02}, {\cite[Theorem~2.1]{Morgenstern15}}]
\label{thm:anthonybarlett}
Let $\mathcal{F} \subseteq [0, U]^X$ be a set of bounded real-valued functions from a domain $X$, and let $\mathcal{D}$ be a distribution over $X$. Let $x_1, x_2, \ldots, x_k \in X$ be a set of $k = \Theta\big((\nicefrac{U}{\epsilon})^2 (\pdim(\mathcal{F}) \log(\nicefrac{U}{\epsilon})+ \log(\nicefrac{1}{p}))\big)$ independent samples from $\mathcal{D}$. Then, with probability at least $1-p$, for every $f \in \mathcal{F}$, the average of $f$ over the samples approximates the expectation of $f$ over $\mathcal{D}$ within an additive term at most $\epsilon$, i.e.,
\[\bigg| \frac{\sum_{i \in [k]} f(x_i)}{k} - \mathbb{E}_{x \sim \mathcal{D}} f(x) \bigg| \leqslant \epsilon.\]
\end{theorem}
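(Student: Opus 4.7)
My plan is to prove Theorem~\ref{thm:anthonybarlett} via the classical \emph{symmetrization} (or \emph{double-sample}) technique from empirical process theory: I would reduce the uniform-convergence statement over the potentially infinite class $\mathcal{F}$ to a concentration estimate on a finite subclass whose cardinality is controlled by $\pdim(\mathcal{F})$.

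First, I would introduce a ghost sample $x'_1, \ldots, x'_k$ drawn i.i.d.~from $\mathcal{D}$, independent of the original sample. A standard argument (Chebyshev applied to the ghost empirical average) shows that, provided $k \geqslant \Omega(U^2/\epsilon^2)$,
\[
\Pr\!\bigg[\sup_{f \in \mathcal{F}} \Big| \tfrac{1}{k}\textstyle\sum_i f(x_i) - \mathbb{E}_{\mathcal{D}} f \Big| > \epsilon\bigg]
\leqslant 2 \Pr\!\bigg[\sup_{f \in \mathcal{F}} \Big| \tfrac{1}{k}\textstyle\sum_i f(x_i) - \tfrac{1}{k}\textstyle\sum_i f(x'_i) \Big| > \tfrac{\epsilon}{2}\bigg].
\]
After this step, the supremum involves only $2k$ fixed points, so it suffices to control $\mathcal{F}$ restricted to such a sample. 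Inserting independent Rademacher signs $\sigma_i \in \{\pm 1\}$ one can further replace $\tfrac{1}{k}\sum_i (f(x_i) - f(x'_i))$ by the symmetrized sum $\tfrac{1}{k}\sum_i \sigma_i (f(x_i) - f(x'_i))$ without changing the distribution.

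Second, conditional on the $2k$ sample points, I would replace $\mathcal{F}$ by an $(\epsilon/8)$-cover in the empirical $\ell_\infty$ pseudo-metric on those points. A Sauer--Shelah-type lemma for pseudo-dimension (due to Haussler and Pollard) bounds the size of such a cover by $(emU/\epsilon)^{\pdim(\mathcal{F})}$, where $m=2k$. The underlying idea is to discretize the range $[0,U]$ to a grid of resolution $\epsilon/8$ and apply binary Sauer--Shelah to each of the $O(U/\epsilon)$ threshold classes that arise directly from the definition of pseudo-dimension. For each function $g$ in the cover, the Rademacher-weighted sum has summands bounded in $[-U,U]$, so Hoeffding's inequality applied conditionally on the points gives a tail bound of $2\exp(-\Omega(k \epsilon^2 / U^2))$. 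A union bound over the cover yields total failure probability at most $(kU/\epsilon)^{\pdim(\mathcal{F})} \cdot 2\exp(-\Omega(k\epsilon^2/U^2))$; setting this to $\leqslant p$ and solving for $k$ recovers the claimed
\[k = \Theta\!\Big(\tfrac{U^2}{\epsilon^2}\big(\pdim(\mathcal{F}) \log(U/\epsilon) + \log(1/p)\big)\Big).\]

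The main obstacle is the combinatorial covering-number bound in terms of $\pdim(\mathcal{F})$. For binary-valued classes (VC-dimension) the basic Sauer--Shelah lemma immediately gives a polynomial-in-$m$ bound; the real-valued generalization requires the range-discretization trick above and a careful accounting of the additional $U/\epsilon$ factor entering the exponent. Once this combinatorial input is in place, the symmetrization-plus-Hoeffding scheme is routine, and indeed the theorem is proved in essentially this form in Anthony--Bartlett~\cite{Anthony02}, which is why the statement is used here as a black box.
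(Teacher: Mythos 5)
The paper does not prove this statement at all: it is imported verbatim as a black box from the cited sources (Anthony--Bartlett and Morgenstern--Roughgarden), so there is no internal proof to compare against, and your symmetrization-plus-covering sketch is precisely the standard argument found in those references. The outline is correct; the only technical wrinkle worth flagging is that the $\ell_\infty$ empirical covering number bounded via $\pdim(\mathcal{F})$ depends on the sample size (it is roughly $(e m U / (\pdim(\mathcal{F})\,\epsilon))^{\pdim(\mathcal{F})}$ with $m = 2k$, not $(e m U/\epsilon)^{\pdim(\mathcal{F})}$ with $m$ absent), so the union bound yields an inequality in which $k$ appears on both sides; one either solves it self-consistently, which only perturbs constants and low-order terms inside the $\Theta(\cdot)$, or instead uses Haussler's sample-size-independent bound on the empirical $\ell_1$ covering number, which suffices because the symmetrized averages change by at most $\epsilon/8$ when $f$ is replaced by an $\ell_1$-close cover element. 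With that accounting made explicit, your sketch recovers the stated sample complexity, and treating the theorem as a citation, exactly as the paper does, is the right call.
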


Consider the following set of functions.
\begin{multline*}
\mathcal{H} = \big\{h_f : \mathbb{R}_{\geqslant 0}^E \ni c \mapsto ||f - f^*(c)||_1 \in \mathbb{R} \ \big| \ f \in \mathbb{R}_{\geqslant 0}^E \text{ such that} \\
f \text{ satisfies flow preservation and } ||f||_1 \leqslant 2c_{\max}|E|\big\}
\end{multline*}

Note that, for a fixed flow prediction $f$, function $h_f$ maps each possible input of the maximum flow problem (i.e., a list of edge capacities) to the $\ell_1$ error of the prediction on this input. In other words, $\cost_{c_1,\ldots,c_k}(f) =  \frac{1}{k} \sum_{i \in [k]}  h_f(c_i)$ and $\cost_{\mathcal{D}}(f) = \mathbb{E}_{c\sim\mathcal{D}} h_f(c)$.

In order to apply Theorem~\ref{thm:anthonybarlett}, we first need to upper-bound $\pdim(\mathcal{H})$.
It follows immediately from the definition that the pseudo-dimension is monotone, i.e., if $\mathcal{F} \subseteq \mathcal{G}$, then $\pdim(\mathcal{F}) \leqslant \pdim(\mathcal{G})$, see, e.g., \cite[Section~6.8, Exercise~1]{ShalevShwartz14}.
Hence, we upper-bound $\pdim(\mathcal{H})$ by the pseudo-dimension of a superset of $\mathcal{H}$, which is obtained by dropping the flow preservation and $\ell_1$-norm requirements on the prediction vector $f$, i.e.,

\[\mathcal{H}' = \big\{h_f : \mathbb{R}_{\geqslant 0}^E \ni c \mapsto ||f - f^*(c)||_1 \in \mathbb{R} \;\big|\; f \in \mathbb{R}_{\geqslant 0}^E \big\}.\]

The pseudo-dimension of $\mathcal{H}'$ will be in turn upper-bounded by the pseudo-dimension of the following class of functions analyzed by Dinitz et al.\footnote{We note that the bound of Theorem~\ref{thm:Hn} follows also from~\cite[Theorem~2.1]{Bartlett98}, because, for a fixed $y \in \mathbb{R}^n$, the function $f_y(x) = ||y - x||_1$ can be computed by a neural network with one hidden layer, a piecewise linear activation function, and $O(n)$ parameters.}

\begin{theorem}[cf.~{\cite[Theorem~7]{Dinitz21}}]
\label{thm:Hn}
Let $\mathcal{H}_n = \{f_y \mid y \in \mathbb{R}^n\}$, where $f_y : \mathbb{R}^n \to \mathbb{R}$ is defined by $f_y(x) = ||y - x||_1$. The pseudo-dimension of $\mathcal{H}_n$ is at most $O(n \log n)$.
\end{theorem}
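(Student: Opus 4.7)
The plan is to upper-bound $\pdim(\mathcal{H}_n)$ by counting, for any candidate shattered set, the number of distinct sign patterns the associated threshold functions can realize as the parameter $y$ varies over $\mathbb{R}^n$; if this count is strictly less than $2^s$, then no set of size $s$ can be shattered.

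Fix candidate points $x_1,\ldots,x_s\in\mathbb{R}^n$ with thresholds $t_1,\ldots,t_s\in\mathbb{R}$, and set $g_j(y)=\|y-x_j\|_1-t_j$. The first observation is that each $g_j$ is piecewise affine in $y$, with its linear pieces determined by the $n$ coordinate hyperplanes $H_{i,j}=\{y\in\mathbb{R}^n:y_i=x_{j,i}\}$, because on each orthant defined by the signs of the differences $y_i-x_{j,i}$ the absolute values $|y_i-x_{j,i}|$ become linear. Let $\mathcal{A}$ be the arrangement in $\mathbb{R}^n$ consisting of all $ns$ hyperplanes $\{H_{i,j}\}_{i\in[n],\,j\in[s]}$. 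By the classical bound on hyperplane arrangements, $\mathcal{A}$ has at most $\sum_{k=0}^{n}\binom{ns}{k}=O(s^n)$ full-dimensional regions, and on every such region $R$ each $g_j$ coincides with a specific affine function $\tilde g_j^R$.

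Within a fixed region $R$, the sign vector $(\operatorname{sgn} g_1(y),\ldots,\operatorname{sgn} g_s(y))$ is constant on each cell of the further partition of $R$ by the $s$ affine hyperplanes $\{\tilde g_j^R=0\}_{j\in[s]}$, so $R$ itself contributes at most $O(s^n)$ distinct sign patterns. Summing over all regions of $\mathcal{A}$ yields a global upper bound of $O(s^n)\cdot O(s^n)=O(s^{2n})$ on the number of sign patterns realizable as $y$ varies over $\mathbb{R}^n$. Requiring $2^s\leqslant O(s^{2n})$ for shattering and taking logarithms gives $s\leqslant 2n\log s+O(1)$, hence $s=O(n\log n)$, which is the claimed bound.

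The main subtlety I anticipate is purely bookkeeping: confirming that the per-region sub-arrangement bound really is $O(s^n)$ (which requires verifying that the $\tilde g_j^R$ are genuinely affine on $R$ and then reusing the arrangement bound a second time), and handling degenerate $y$ with $g_j(y)=0$ exactly. The latter can be dismissed by a standard generic-perturbation argument on the thresholds $t_j$ that leaves the count of open sign patterns unchanged. All other ingredients are textbook, so I would cite rather than re-derive the Zaslavsky/Buck-type region count.
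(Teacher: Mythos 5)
Your argument is sound, but note that the paper does not prove this statement at all: it is imported verbatim from Dinitz et al.\ (their Theorem~7), with a footnote observing that it also follows from Bartlett's pseudo-dimension bound for piecewise-linear neural networks, since $f_y(x)=\|y-x\|_1$ is computable by a one-hidden-layer network with $O(n)$ parameters. Your sign-pattern/arrangement count is the standard direct route to the same bound and is essentially the argument underlying the cited results. Two small points of bookkeeping. First, $\sum_{k=0}^{n}\binom{ns}{k}$ is $O\big((ns)^n\big)$, not $O(s^n)$; the same applies if you bound the total face count rather than just the full-dimensional regions. This costs an extra additive $n\log n$ after taking logarithms, so the recursion becomes $s\leqslant 2n\log s+n\log n+O(n)$, which still resolves to $s=O(n\log n)$ --- the conclusion is unaffected, but the intermediate claim as written is not literally correct when $n$ is not treated as a constant. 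Second, for the degenerate boundary cases you do not actually need a perturbation of the thresholds: the pattern $\big(\mathbb{1}[g_1(y)\leqslant 0],\ldots,\mathbb{1}[g_s(y)\leqslant 0]\big)$ is constant on each face (of every dimension) of the refined arrangement, and the total number of faces of an arrangement of $O(ns)$ hyperplanes in $\mathbb{R}^n$ is still $O\big((ns)^n\big)$ up to constants depending on nothing but the exponent, so counting faces instead of open regions absorbs the boundary patterns directly. With those fixes your proof stands as a clean, self-contained alternative to citing \cite{Dinitz21} or \cite{Bartlett98}.
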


Indeed, to see that $\pdim(\mathcal{H}') \leqslant \pdim(\mathcal{H}_{|E|})$, observe that 
if $\{c_1, c_2, \ldots, c_s\}$ is shattered by $\mathcal{H}'$, then $\{f^*(c_1), f^*(c_2), \ldots, f^*(c_s)\}$ is shattered by $\mathcal{H}_{|E|}$, and the bound follows.\footnote{Like monotonicity, this is a general property of the pseudo-dimension, it follows immediately from the definition, and does not use any specific property of function $f^*$, but we were not able to find a suitable reference.}

Second, to use Theorem~\ref{thm:anthonybarlett}, we need a bound on the maximum value of the considered functions. For every flow prediction $f$ satisfying $||f||_1 \leqslant 2c_{\max}|E|$ and for every capacities $c\in\supp(\mathcal{D})$ we have 
\[
h_f(c) = ||f-f^*(c)||_1 \leqslant ||f||_1 + ||f^*(c)||_1 \leqslant 2c_{\max}|E| + c_{\max}|E| = 3c_{\max}|E|.
\]

Now, having upper bounds on $\pdim(\mathcal{H})$ and on values of $h_f$'s, we are ready to prove the lemma.

\begin{proof}[Proof of Lemma~\ref{lem:altsample}]
Let $U=3c_{\max}|E|$. Consider the following class of functions.
\[\bar{\mathcal{H}} = \big\{\bar{h}_f : \mathbb{R}_{\geqslant 0}^E \ni c \mapsto \min\{h_f(c), U\} \in [0,U] \;\big|\; h_f \in \mathcal{H} \big\}\]
Clearly, any set that is shattered by $\bar{\mathcal{H}}$ is also shattered by $\mathcal{H}$, and hence
\[\pdim(\bar{\mathcal{H}}) \leqslant \pdim(\mathcal{H}) \leqslant \pdim(\mathcal{H}') \leqslant \pdim(\mathcal{H}_{|E|}) \leqslant O(|E| \log |E|).\]
We apply Theorem~\ref{thm:anthonybarlett} to $\bar{\mathcal{H}}$, with $\epsilon=1$ and $p=\nicefrac{1}{\poly(c_{\max}|E|)}$, and we conclude that $k=\Theta(c_{\max}^2 |E|^3 \log (c_{\max} |E|) \log E)$ samples suffice to guarantee that with high probability
\[| \cost_{c_1,\ldots,c_k}(f) - \cost_{\mathcal{D}}(f) | = \bigg| \frac{\sum_{i \in [k]} h_f(c_i)}{k} - \mathbb{E}_{c \sim \mathcal{D}} h_f(c) \bigg| \leqslant 1.\]
for all $f$ with $||f||_1 \leqslant 2c_{\max}|E|$.
\end{proof}

\end{document}